 \newtheorem{definition}{Definition}
 \newtheorem{problem}{Problem}
\newtheorem{assumption}{Assumption}
\newtheorem{remark}{Remark}
\newtheorem{theorem}{Theorem}
\newtheorem{lemma}{Lemma}
\newtheorem{example}{Example}
\acrodef{mitl}[MITL]{metric interval temporal logic}
\acrodef{sde}[SDE]{stochastic differential equation}
\acrodef{dfta}[DFTA]{deterministic finite-state timed automaton}
\acrodef{mdp}[MDP]{Markov decision process}
\newcommand{\calA}{\mathcal{A}}
\newcommand{\calU}{\mathcal{U}}
\newcommand{\calT}{\mathcal{T}}
\newcommand{\bbR}{\mathbb{R}}
\newcommand{\bbZ}{\mathbb{Z}_{\ge 0}}
\newcommand{\Occ}{\mathsf{Occ}}
\newcommand{\calV}{\mathcal{V}}
\newcommand{\calF}{\mathcal{F}}
\newcommand{\calM}{\mathcal{M}}
\newcommand{\calAP}{\mathcal{AP}}
  \newcommand{\sink}{\mathsf{sink}}
  \newcommand{\truev}{\top}
  \newcommand{\init}{\mathsf{Init}}
 \newcommand{\nat}{\mathbb{N}}
\providecommand{\abs}[1]{\left|#1\right|}
\providecommand{\norm}[1]{\lVert#1\rVert}
\title{\LARGE \bf Computational methods for stochastic control with
  metric interval temporal logic
  specifications 
  \author{Jie Fu and Ufuk Topcu}
\thanks{J. Fu and U. Topcu are with the Department of Electrical and
  Systems Engineering, University of Pennsylvania, Philadelphia, PA
  19104, USA  {\tt\small jief, utopcu@seas.upenn.edu}.}}
\begin{document}
\maketitle
\thispagestyle{empty}
\pagestyle{empty}

\begin{abstract}
  This paper studies an optimal control problem for continuous-time
  stochastic systems subject to reachability objectives specified in a
  subclass of metric interval temporal logic specifications, a
  temporal logic with real-time constraints. We propose a
  probabilistic method for synthesizing an optimal control policy that
  maximizes the probability of satisfying a specification based on a
  discrete approximation of the underlying stochastic system.  First,
  we show that the original problem can be formulated as a stochastic
  optimal control problem in a state space augmented with finite
  memory and states of some clock variables. Second, we present a
  numerical method for computing an optimal policy with which the
  given specification is satisfied with the maximal probability in
  point-based semantics in the discrete approximation of the
  underlying system. We show that the policy obtained in the discrete
  approximation converges to the optimal one for satisfying the
  specification in the continuous or dense-time semantics as the
  discretization becomes finer in both state and time.  Finally, we
  illustrate our approach with a robotic motion planning example.
\end{abstract}
\section{Introduction}
\label{sec:intro}
Stochastic optimal control is an important research area for analysis
and control design for continuous-time dynamical systems that operate
in the presence of uncertainty. 
However, existing stochastic control methods cannot be readily applied
to handle complex temporal logic specifications with real-time
constraints, which are of growing interest to the design of autonomous
and semiautonous systems
\cite{pnueli1992temporal,koymans1990specifying,fainekos2009temporal,wongpiromsarn2012receding}. In
this paper, we propose a numerical method for stochastic optimal
control with respect to a subclass of metric temporal logic
specifications. Particularly, given a specification encoding desirable
properties of a continuous-time stochastic system, the task is to
synthesize a control policy such that if the system implements the
policy, then the probability of a path satisfying the formula is
maximized.


Metric temporal logic (MTL) is one of many real-time logics that not
only express the relative temporal ordering of events as linear
temporal logic (LTL), but also the duration between these events. For
example, a surveillance task of a mobile robot, infinitely revisiting
region 1 and 2, can be expressed in LTL. But tasks with quantitative
timing constraints, for instance, visiting region 2 within 5 minutes
after visiting region 1, require the expressive power of MTL. For
system specifications in LTL and its untimed variants, methods have
been developed for quantitative verification of discrete-time
stochastic hybrid systems
\cite{abate2010approximate,abate2011quantitative}, control design of
continuous-time and discrete-time linear stochastic systems
\cite{lahijanian2009probabilistic,Maria2015}.  For MTL and its
variants, a specification-guided testing framework is proposed in
\cite{Abbas2014} for verification of stochastic cyber-physical
systems.  Reference \cite{Karaman2008} proposes a solution to the
vehicle routing problem with respect to MTL specifications. Reference
\cite{JunLiu2014} develops an abstraction technique and a method of
transforming MTL formulas to LTL formulas. As a result, existing
synthesis methods for discrete deterministic systems with LTL
constraints can be applied to design switching protocols for
continuous-time deterministic systems in dynamical environment subject
to MTL constraints. Reference \cite{VasuHSCC} proposes a reactive
synthesis method to non-deterministic systems with respect to
maximizing the robustness of satisfying a specification in signal
temporal logic, which is a subclass of MTL. The robustness of a path
is measured by the distance between this path and the set of paths
that satisfy the specification.
Our work differs from existing ones in both the problem formulation
and control objective.  We deal with systems with stochastic dynamics,
rather than non-deterministic systems \cite{VasuHSCC,JunLiu2014}.  We
consider reachability objectives specified in \ac{mitl}, which is a
subclass of MTL. The optimality of control design is evaluated by the
probability of satisfying the given specification. The synthesis
method is with respect to quantitative criteria (the probability of
satisfying the formula), not qualitative criteria (whether the formula
is satisfied).

Our solution approach utilizes the \emph{Markov chain approximation
  method} \cite{kushner2001numerical} to generate a discrete
abstraction in the form of a \ac{mdp} approximating the
continuous-time stochastic system. Based on a product operation
between the discrete abstraction and a finite-state automaton that
represents the desirable system property, a near optimal policy with
respect to the probability of satisfying the formula in the
\emph{point-based semantics} of \ac{mitl} \cite{henzinger1991temporal}
can be computed by solving an optimal planning problem in the
\ac{mdp}. We show that as the discretization gets finer in both state
space and time space, the optimal control policy in the abstract
system converges to the optimal one in the original stochastic system
with respect to the probability of satisfying the \ac{mitl} formula in
the \emph{continuous or dense-time semantics} \cite{Alur1996}.





\section{Preliminaries and problem formulation}
\label{sec:prelim}

\subsection{The system model and timed behaviors}

We study stochastic dynamical systems in continuous time.
The state of the system evolves according to the \ac{sde}
\begin{equation}
\label{sde}\text{SDE}:
\left\{ 
\begin{array}{l}
  dx(t)= f(x(t),u(t))dt + g(x(t))dw,\\
  x(0)=x_0,
\end{array}\right.
\end{equation}
where $f: X\times U \rightarrow \bbR^n$ and
$g: X \rightarrow \bbR^{n\times k}$ are continuous and bounded
functions given $X$ and $U$ as compact state and input
space; $w(\cdot)$ is an $\mathbb{R}^k$-valued, $\calF_t$-Wiener
process which serves as a ``driving noise'' and is defined on the
probability space $(\Omega, \calF, P)$; $x(\cdot)$ is an $X$-valued,
$\calF_t$-adapted, measurable process also defined on
$(\Omega, \calF, P)$ and $u(\cdot) $ is an \emph{admissible control
  law}, i.e., a $U$-valued, $\calF_t$-adapted, measurable process
defined on $(\Omega, \calF, P)$. 
We say $x(\cdot), u(\cdot)$ solve the SDE in \eqref{sde} provided that
\begin{equation}
  \label{integral}x(t)=x(0)+\int_0^t f(x(\tau), u(\tau))d\tau+ \int_0^t g(x(\tau))dw(\tau),
\end{equation}
holds for all time $t\ge 0$.





We introduce a labeling function that relates a sample path of the
\ac{sde} in \eqref{sde} to a \emph{timed behavior}. Let $\calAP$ be
a finite set of atomic propositions and $L: X \rightarrow 2^{\calAP}$
be a labeling function that maps each state $x\in X$ to a set of
atomic propositions that evaluate true at that state.

A \emph{time interval} $I$ is a convex set $\langle t, t'\rangle$
where $t,t'\in\bbR_{\ge 0}$, the symbol `$\langle$' can be one of
`$($', `$ [$', and the symbol `$\rangle $' can be one of `$)$',
`$ ]$', and $t\le t'$. For a time interval of the above form, $t$ and
$t'$ are left and right end-points, respectively.  A time interval is
empty if it contains no point. A time interval is \emph{singular} if
$t=t'$ and it contains exactly one point.
\begin{definition} \cite{Furia2006}
  A \emph{dense-time behavior} over an infinite-time domain
  $[0, \infty)$ and a set $\calAP$ of atomic propositions is a
  function $b: [0, \infty)\rightarrow 2^{\calAP}$ which maps every
  time instant $t \ge 0$ to a set $b(t)\in 2^{\calAP}$ of atomic
  propositions that hold at $t$.
\end{definition}
Given a continuous sample path $x(\cdot,\omega), \omega \in \Omega$ of
the stochastic process $x(\cdot)$, the timed behavior $b$ of this
sample path is $b(t)= L(x(t,w))$, for all $t\ge 0$.

\begin{definition}
  \cite{Furia2006} Let $b$ be a dense-time behavior and
  $\delta \in \mathbb{R}_{>0}$ be a positive real number, referred to
  as the \emph{sampling interval}. The \emph{canonical sampling}
  $b^\delta =\{b^\delta_n, n \in \bbZ \}$ of the timed behavior $b$ is
  defined such that $b^{\delta}_n= b( n\delta).$
\end{definition}



\subsection{Specifications}

We introduce metric interval temporal logic \cite{Alur1996}, a
subclass of MTL, to express system specifications.
\begin{definition}[Metric interval temporal logic]
Given a set $\calAP$ of atomic propositions, the formulas of \ac{mitl}
are built from $\calAP$ by Boolean connectives and time-constrained
versions of the \emph{until} operator $\cal U$ as follows.
\[
\varphi\coloneqq \top \mid \perp \mid \varphi_1\land \varphi_2 \mid \neg
\varphi\mid p \mid \varphi_1 \mathcal{U}_I \varphi_2
\]
where $p \in \calAP$, $I $ is a \emph{nonsingular} time interval with
integer end-points, and $\top$, $\perp$ are unconditional true and
false, respectively.
\end{definition}


\paragraph*{Dense-time semantics of \ac{mitl}} Given a timed behavior
$b$, we define $b,t \models \varphi$ with respect to an \ac{mitl}
formula $\varphi$ at time $t$ inductively as follows:
\begin{itemize}
\item $b,t\models p$ where $p\in \calAP$ if and only if $ p \in b(t)$;
\item $b,t \models \neg \varphi $ where $b,t \not \models \varphi$;
\item $b,t \models \varphi_1\land \varphi_2$ if and only if $b,t\models
  \varphi_1$ and $b,t\models \varphi_2$;
\item $b,t\models \varphi_1 \calU_I \varphi_2$ if and only if there
  exists $t'\in I$ such that $b,t+t' \models\varphi_2$ and for all
  $t''\in [0,t')$, $b, t+t'' \models \varphi_1$;
\end{itemize}
We write $b\models \varphi$ if $b, 0\models \varphi$. We also define
temporal operator $\lozenge_I \varphi= \truev \calU_I \varphi$
(eventually, $\varphi$ will hold within interval $I$ from now) and
$\square_I \varphi = \neg (\lozenge_I \neg \varphi)$ (for all
points within $I$, $\varphi$ holds.)

\subsection{Timed automata}
An \ac{mitl} formula can be translated into equivalent
non-deterministic timed automaton \cite{Alur1996}. We consider a
fragment of \ac{mitl} which can be translated into equivalent
\emph{deterministic} timed automaton.

Let $\Sigma$ be a finite alphabet. $\Sigma^\ast, \Sigma^\omega$ are
the sets of finite and infinite words (sequences of symbols) over
$\Sigma$. A \emph{(infinite) timed word} \cite{Alur1994183} over
$\Sigma$ is a pair $w=(\tau,\sigma)$, where
$\sigma=\sigma_0\sigma_1\ldots \in \Sigma^\omega$ is an infinite word
and $\tau= \tau_0\tau_1\ldots $ is an infinite \emph{timed sequence},
which satisfies \begin{inparaenum}[1)]
\item \emph{Initialization}: $\tau_0= 0$;
\item \emph{Monotonicity}: $\tau$ increases strictly monotonically;
  i.e., $\tau_i<\tau_{i+1}$, for all $i\ge 0$;
\item \emph{Progress}: For every $n \ge 1$ and
  $ \tau_0 \le t< \tau_n$, there exists some $i\ge 0$, such that
  $\tau_i>t$.
\end{inparaenum}
The conditions ensure that there are finitely many symbols (events) in
a bounded time interval, known as \emph{non-Zenoness}.  We also write
$w=(\tau,\sigma)= (\tau_0,\sigma_0)(\tau_1,\sigma_1)\ldots$.




Before the introduction of timed automata, we introduce \emph{clock}
and \emph{clock} constraints: Let $C$ be a finite set of clocks,
$C= \{c_1,c_2,\ldots, c_M\}$. We define a set $\Phi_C$ of clock
constraints over $C$ in the following manner. Let $k \in \nat$ be a
non-negative integer, and $\bowtie \in \{=,\ne , <, >, \ge, \le \}$ be
a comparison operator,
\[
\varphi\coloneqq \top \mid \perp \mid c \bowtie k \mid c-c' \bowtie k
\mid \varphi_1\land\varphi_2\mid \varphi_1\lor \varphi_2,
\]
where $c, c'\in C$ are clocks.
\begin{definition}\cite{Alur1994183}
  A \emph{deterministic timed automaton} is a tuple
  $\calA=\langle Q, 2^{\calAP}, \init, F, C, T \rangle$ where $Q$ is a
  finite set of states, $2^{\calAP}$ is a finite set of alphabet with
  the set $\calAP$ of atomic propositions, $\init$ is the initial
  state, $F$ is a finite set of accepting states, $C$ is a finite set
  of clocks. The transition function
  $T: Q\times 2^\calAP\times \Phi_C \rightarrow Q \times 2^C $ is
  deterministic and interpreted as follows: If
  $T(q, a, \phi)=(q', C')$ then $\calA$ allows a transition from
  $q$ to $q'$ when the set $a \in 2^{\calAP}$ of atomic propositions
  evaluate true and the clock constraint $\phi \in \Phi_C$ is
  met. After taking this transition, the clocks in $C'\subseteq C$ are
  reset to zero, while other clocks remain unchanged.
\end{definition} 




For each clock $c_i \in C$, we denote $\calV_i$ the range of that
clock.  For notational convenience, we define a \emph{clock vector}
$v \in \bbR^M$ where the $i$-th entry $v[i]$ of the clock vector $v$
is the value of clock $c_i$, for $i\in \{1,2,\ldots, M\}$. Given
$t\in \mathbb{R}_{\ge 0}$, let
$v\oplus t= (v[1]+t, v[2]+t,\ldots, v[M]+t)$.  We use $\bm 0$ for the
clock vector $v$ where $v[i]=0$ for all $i \in \{1,2,\ldots, M\}$ and
$\calV = \Pi_{i=1,\ldots,M} \calV_i$ the set of all possible clock
vectors in $\calA$. Note that a clock vector is essentially a
\emph{clock valuation} defined in \cite{Alur1994183}.

A \emph{configuration} of $\calA$ is a pair $(q, v)$ where $q$ is a
state and $v$ is a clock vector. A transition
$T(q,a, \phi)= (q', C')$ being taken from the configuration $(q,v)$
after $\delta$ time units is also written as
$(q,v)\xrightarrow{\delta, a}(q',v')$ where
$v \oplus \delta \models \phi$, and $v'[i]= v[i]+\delta$ if
$c_i\notin C'$, otherwise $v'[i]=0$.

A \emph{run} in $\calA$ on a timed word
$w=(\tau_0, a_0 )(\tau_1,a_1)\ldots $ is an infinite alternating
sequence of configurations and delayed transitions
$\rho=(\init, \bm 0)\xrightarrow{\Delta \tau_0, a_0}
(q_0,v_0)\xrightarrow{\Delta \tau_1, a_1}(q_1,v_1) \ldots $,
with $\Delta \tau_0=\tau_0$ and $\Delta\tau_i = \tau_{i}-\tau_{i-1}$
for $i \ge 1$, subject to the following conditions:
\begin{enumerate}

\item There exists $C_0 \subseteq C$ and $\phi_0 \in \Phi_C$ such
  that $\bm 0 \oplus \tau_0 \models \phi_0$,
  $T(\init, a_0, \phi_0 ) = (q_0, C_0)$ and $v_0[i]= \tau_0$ for
  all $c_i \notin C_0$ and $v_0[i]=0$ for all $c_i\in C_0$.
\item For each $ i \ge 0 $, there exist $C_{i+1}\subseteq C$ and
  $\phi_{i+1}\in \Phi_C$ such that $v_i \oplus \Delta \tau_{i+1}$
  satisfies the clock constraint $\phi_{i+1}$,
  $T(q_i, a_{i+1}, \phi_{i+1})=(q_{i+1}, C_{i+1})$ is defined and
  $v_{i+1}[k]=v_{i}[k]+\Delta\tau_{i+1}$ for all $c_k\notin C_{i+1}$
  and $v_{i+1}[k]=0$ for all $c_k\in C_{i+1}$.
\end{enumerate} 

We consider reachability objectives: A run $\rho$ on a timed word $w$
is \emph{accepting} if and only if $\Occ(\rho)\cap F\ne \emptyset$
where $\Occ(\rho)$ is the set of states in $Q$ occurring in
$\rho$. The set of timed words on which runs are accepted by $\calA$
is called the language of $\calA$, denoted $\mathcal{L}(\calA)$.

\begin{example}
As a simple example of timed automata, let $\calAP=\{R_1\}$ and 
the specification formula is $ \varphi = \lozenge_{[3,5]} R_1$. The
reachability specification can be expressed with a deterministic timed
automaton $\calA_\varphi$ in Figure~\ref{fig:reach}. The set of final
states is $F=\{q_1\}$.  The timed automaton accepts a timed word with
a prefix $(0, \{\neg R_1\})(3.5, \{R_1\})$, i.e.,
$w = (0, \{\neg R_1\}) (3.5, \{R_1\}) \ldots $ since
$(\init,0)\xrightarrow{0, \{\neg R_1\}}(\init,0) \xrightarrow{3.5,
  \{R_1\}} (q_1,0)$
and $q_1$ is accepting. It does not accept
$w= (0, \{\neg R_1\}) (2.8,\{R_1\}) \ldots$,
$w=(\tau, (\{\neg R_1\})^\omega )$ for an arbitrary timed sequence
$\tau$, or $w=(0, \{\neg R_1\})(6,\{R_1\})\ldots$ because either $R_1$
is evaluated true when $c<3$ or $c > 5$, or it is never true over an
infinite timed sequence.
\begin{figure}[h]
\centering
\includegraphics[width=0.4\textwidth]{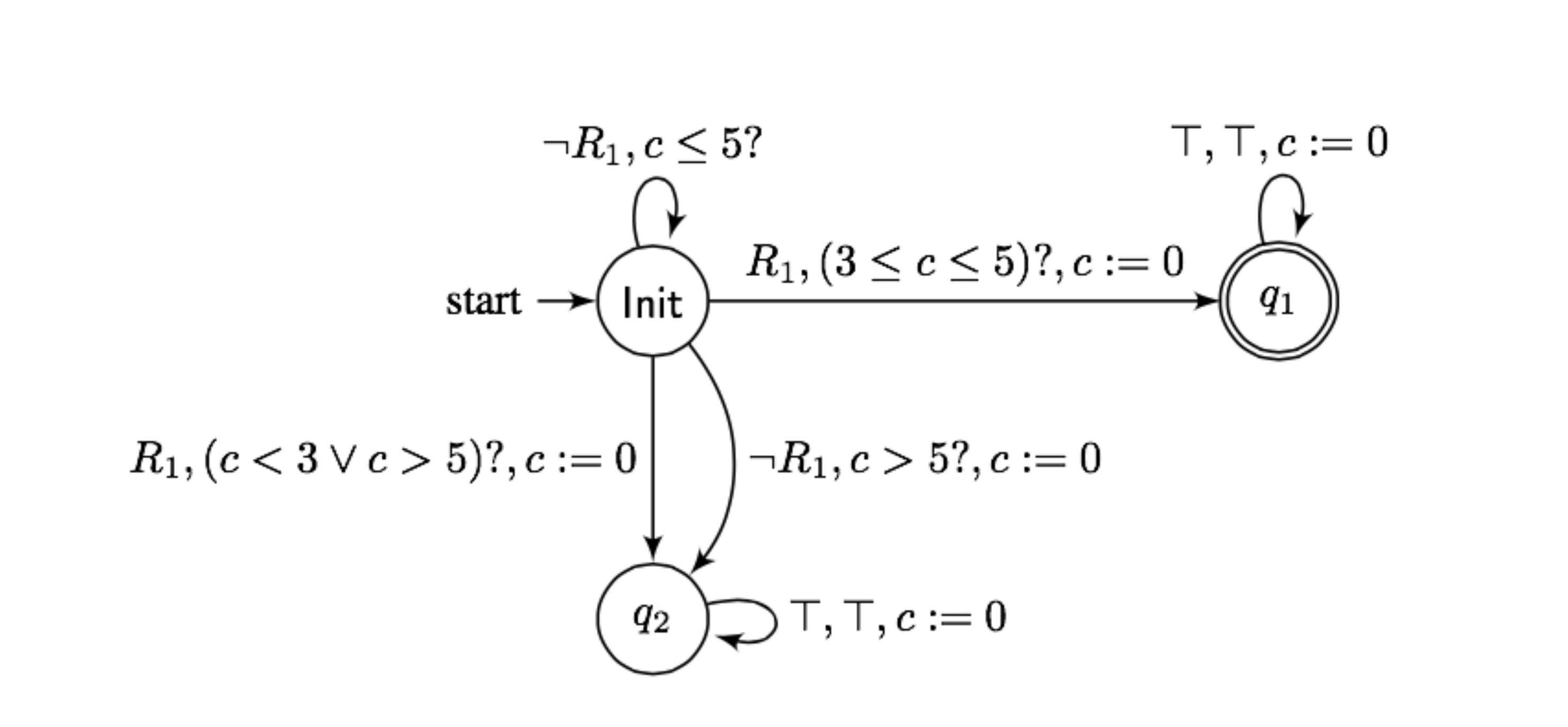}
\caption{Timed automaton $\calA_\varphi$ for
  $\varphi = \lozenge_{[3,5]} R_1$. Only one clock $c$ is used. A
  transition labeled $(a, \phi_c)$ is taken if and only if both
  $\phi_c$ and $\{a \}$ evaluate true. A transition labeled
  $(a, \phi_c,c:=0)$ is taken if and only if both $\phi_c$ and
  $\{a\} $ evaluate true and along with taking the transition, the
  clock $c$ is reset to $0$.}
\label{fig:reach}
\end{figure}
\end{example}




\subsection{Problem formulation}
Given a sampling interval $\delta$ and a timed behavior
$b:[0,\infty)\rightarrow 2^{\calAP}$, we map the canonical sampling
$b^{\delta}$ of $b$ to a timed word
$ \calT(b^\delta) = w= (0,\sigma_0)(\delta, \sigma_1) \ldots$ such
that for any $i\ge 0$, $\sigma_i = b(i\delta)$. We say that the timed
behavior $b$ satisfies the formula $\varphi$ in the \emph{point-based
  semantics} under the sampling interval $\delta$, denoted
$b^\delta \models \varphi$, if and only if $\calT(b^\delta)$ is
accepted in the timed automaton $\calA_\varphi$ that expresses
$\varphi$. 
The sampling interval $\delta$
determines a sequence of positions (time instances) $0,
\delta, 2\delta,\ldots$ in the timed behavior. With $\delta
$ being a positive infinitesimal, any position in a timed behavior
$b$
appears in the timed sequence of the timed word $\calT(
b^\delta)$. Thus, we say that the timed behavior
$b$
satisfies $\varphi$
in the \emph{continuous or dense-time semantics}, i.e., $b\models
\varphi$, if and only if $ \lim_{\delta\rightarrow 0}
\calT(b^\delta)\in
L(\calA_\varphi)$. A formal definition of satisifiability of MTL
formulas over dense-time and point-based semantics is given in
\cite{Bouyer2009} and the relation between these two semantics has
been studied in \cite{Furia2006}.
We say that a sample path of the \ac{sde} in \eqref{sde} satisfies an
\ac{mitl} formula $\varphi$ in the dense-time semantics
(resp. point-based semantics under the sampling interval $\delta$) if
its timed behavior satisfies $\varphi$ in the dense-time semantics
(resp. point-based semantics under the sampling interval
$\delta$). Formally, let $x(\cdot, w)$ where $w\in \Omega$ be a sample
path of the stochastic process $\{ x(t), t \ge 0 \}$. We have that
$\lim_{\delta\rightarrow 0} [L(x(\cdot,w))]^\delta \models \varphi$ is
equivalent to $L(x(\cdot,w)) \models \varphi$.




Given a stochastic process $x(\cdot)$ and an admissible control law
$u(\cdot)$ that solve the \ac{sde} in \eqref{sde}, the probability of
satisfying a formula $\varphi$ in the system under the control law
$u(\cdot)$ is the sum of probabilities of continuous sample paths of
$x(\cdot)$ that satisfy the formula $\varphi$ in the dense-time or
point-based semantics (with respect to a given sampling interval).

\begin{problem}
\label{problem}
Given an \ac{sde} in \eqref{sde} and a timed automaton
$\calA_\varphi=\langle Q, 2^{\calAP},\init, F, C, T \rangle$
expressing an \ac{mitl} formula $\varphi$, compute a control input
$u(\cdot)$ that maximizes the probability of satisfying $\varphi$ in
the dense-time semantics.
\end{problem}


\section{Main result}
In this section, we first show that for the \ac{sde} in \eqref{sde},
Problem~\ref{problem} can be formulated as a stochastic optimal
control problem in a system derived from the \ac{sde} with an
augmented state space for capturing relevant properties with respect
to its \ac{mitl} specification. Then, we introduce a numerical scheme
that computes an optimal policy in a discrete-approximation of the
\ac{sde} in \eqref{sde} with respect to the probability of satisfying
the specification in the point-based semantics. The numerical scheme
is based on the so-called Markov chain approximation method
\cite{kushner2001numerical}.  We prove that such a policy converges to
a solution to Problem~\ref{problem} as the discretization gets finer.

We make two assumptions.
\begin{assumption}
\label{assumption0}
  The state space $X$ and the clock vector space $\calV$ are bounded.
\end{assumption}
This condition ensures a finite number of states in the discrete
approximation. In certain cases, we might also require $U$ to be
bounded in order to approximate the input space with a finite set.

\begin{assumption}\label{assumption1}
  $f(\cdot)$ and $g(\cdot)$ are bounded, continuous, and Lipschitz
  continuous in state $x$, while $f(\cdot)$ is uniformly so in $u$.
\end{assumption}
Assumption~\ref{assumption1} ensures that the \ac{sde} in \eqref{sde}
has a unique solution for a given controller $u(\cdot)$.


\subsection{Characterizing the reachability probability}
For reachability objectives in \ac{mitl}, Problem~\ref{problem} is
also referred to as the \emph{probabilistic reachability problem}.







A state in $X\times Q\times \calV$ is called a \emph{product state},
following from the fact that it is a state in a product construction
between the stochastic process for the controlled stochastic system
and the timed automaton expressing the specification. We define a
projection $\pi_i$ such that for a given tuple $s$, $\pi_i(s)$ is the
$i$-th element in the tuple. The projection $\pi_i$ is extended to
sequences of tuples in the usual way:
$\pi_i(s\rho)= \pi_i(s)\pi_i(\rho)$ where $s$ is a tuple and $\rho$ is
a sequence of tuples.

Let $S= X\times Q\times \calV$. For a stochastic process
$\{x(t), t\ge 0 \}$, we derive a \emph{product stochastic process}
$\{s(t), t\ge 0\}$ where $s(t)= (x(t), q(t), v(t))$ is a random
variable describing the product state. The process $\{s(t), t\ge 0
\}$ 
satisfies the following conditions.
\begin{itemize}
\item $s(0)= (x(0), q(0),v(0))$ where $v(0) =\bm 0$ and
  $(\init,\bm 0 )\xrightarrow{0, L(x(0))}(q(0),\bm 0)$.
\item For any time $\tau \in [0,\infty)$, let
  $\delta= \inf_t\big(\exists \phi, v(\tau)\oplus t \models \phi
  \text{ and } T \text{ is defined for }(q(\tau), L(x(\tau+t)),
  \phi)\big)$.
  If $T(q(\tau), L(x(\tau+\delta)), \phi)=(q',C')$, then let
  $q(\tau+\delta)=q'$, $v(\tau+\delta)[i]=v(\tau)[i]+\delta$ for
  $c_i \notin C'$ and $v(\tau+\delta)[i]=0$ for $c_i\in C'$. Moreover,
  for all $\tau \le t < \tau + \delta$,
  $q(t)=q(\tau), v(t)=v(\tau)\oplus t$.
\end{itemize}

Alternatively, given a sample path $x(\cdot,\omega)$,
$\omega \in \Omega$, suppose that at time $\tau$ the configuration in
$\calA_\varphi$ is $(q,v)$, the labeling $L(x(\tau+\delta,\omega))$
and the clock vector $v\oplus \delta$ trigger a transition precisely
at time $\tau+\delta$ and between the interval $[\tau, \tau+\delta)$,
no transition is triggered. Then, the configuration in $\calA_\varphi$
changes from $(q,v)$ to $(q',v')$ also at time $\tau+\delta$ provided
that $(q,v) \xrightarrow{\delta, L(x(\tau+\delta, \omega))} (q',v')$.
Moreover, for any time $t$ during the time interval
$ \tau \le t < \tau+\delta$, the state in the specification automaton
remains to be $q$ and each clock increases by $t$ as the time passes.

For a measurable function $f$ that maps sample paths in the process
$s(\cdot)$ into reals, we write $E_s^u(f)$ for the expected value of
$f$ when the initial state is $s(0)=s$.

The following lemma is an immediate consequence of the derivation
procedure for the product stochastic process.
\begin{lemma}
\label{lma:relate}
  Given a set $G= X\times F\times \calV$, let $P_{x}(\varphi )$ denote
  the probability of a sample path in the stochastic process
  $\{x(t),t\ge 0\}$ starting from $ x (0)=x$ and satisfying $\varphi$
  in the dense-time semantics and $P_{s}( G)$ is the probability of
  reaching the set $G$ in the derived product stochastic process
  $\{s(t), t\ge0 \}$ with $s(0)=s$. It holds that
  $P_{x}(\varphi )= P_{s}( G). $
\end{lemma}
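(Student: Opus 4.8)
The plan is to establish $P_x(\varphi) = P_s(G)$ by constructing a measurable bijection between the sample paths of the original process $\{x(t), t \ge 0\}$ that satisfy $\varphi$ in the dense-time semantics and the sample paths of the product process $\{s(t), t \ge 0\}$ that reach the target set $G = X \times F \times \calV$, and showing this bijection preserves the probability measure. Since the product process is derived \emph{pathwise} from $\{x(t)\}$ (the $x$-component is untouched, and $(q(t), v(t))$ is a deterministic function of the history of $x(\cdot)$ up to time $t$ via the deterministic transition function $T$ of $\calA_\varphi$), the underlying probability space $(\Omega, \calF, P)$ is shared, and it suffices to show that for every $\omega \in \Omega$, the sample path $x(\cdot, \omega)$ satisfies $\varphi$ in the dense-time semantics if and only if the induced product sample path $s(\cdot, \omega)$ visits $G$.

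First I would unwind the definitions: by the problem setup, $x(\cdot,\omega) \models \varphi$ in the dense-time semantics iff its timed behavior $b = L(x(\cdot,\omega))$ satisfies $b \models \varphi$, which by the translation of \ac{mitl} into the deterministic timed automaton $\calA_\varphi$ means that the timed word generated by $b$ is accepted by $\calA_\varphi$. Because $\calA_\varphi$ is deterministic, there is exactly one run on that timed word, namely the run $\rho = (\init, \bm 0) \xrightarrow{\Delta\tau_0, a_0} (q_0, v_0) \xrightarrow{\Delta\tau_1, a_1} (q_1, v_1) \cdots$; I would then check that this run is precisely the sequence of configurations traced out by $(q(t), v(t))$ in the definition of the product process — the infimum-of-times construction for $\delta$ in the second bullet of the product process definition exactly picks out the moments $\tau_i$ when $\calA_\varphi$ fires a transition, and between consecutive firings the configuration evolves by letting clocks increase, matching the delayed-transition semantics. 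Hence the set $\Occ(\rho)$ of automaton states visited along $\rho$ equals $\{q(t) : t \ge 0\}$. The acceptance condition $\Occ(\rho) \cap F \ne \emptyset$ is then equivalent to $\exists t \ge 0,\ q(t) \in F$, which is equivalent to $\exists t,\ s(t) = (x(t), q(t), v(t)) \in X \times F \times \calV = G$, i.e., the path reaches $G$. This establishes the pathwise equivalence.

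Since the event $\{x(\cdot,\omega) \models \varphi\}$ and the event $\{s(\cdot,\omega) \text{ reaches } G\}$ coincide as subsets of $\Omega$ (up to the measurability bookkeeping below), they have the same probability, giving $P_x(\varphi) = P_s(G)$.

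The main obstacle I expect is not the logical equivalence — which is essentially the determinism of $\calA_\varphi$ together with careful definition-chasing — but the \textbf{measurability} issues lurking underneath: one must argue that the map $\omega \mapsto (q(\cdot,\omega), v(\cdot,\omega))$ is measurable so that $s(\cdot)$ is a well-defined stochastic process and the event of reaching $G$ is in $\calF$, and that the event $\{\omega : L(x(\cdot,\omega)) \models \varphi\}$ is itself measurable (which implicitly relies on $L$ being such that the preimages of the basic \ac{mitl}-relevant sets are measurable, e.g. $L$ measurable with respect to a suitable partition of $X$). There is also a subtle point about \emph{non-Zenoness}: the product process construction implicitly assumes that the sequence of transition times $\tau_i$ induced by a sample path does not accumulate, so that $(q(t), v(t))$ is defined for all $t$; for a generic continuous sample path this requires that the labeling changes only finitely often in bounded time, which would need either an assumption on $L$ (e.g. that region boundaries are crossed non-Zeno almost surely) or a restriction to the fragment of \ac{mitl} whose automata only test, not reset, or reset in a bounded-variability-safe way. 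I would flag this and either invoke such an assumption or note that it holds almost surely under the Lipschitz/boundedness hypotheses on $f, g$ together with a mild regularity condition on the partition inducing $L$; otherwise the lemma would need the qualifier ``for almost every sample path.''
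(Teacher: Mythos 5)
Your argument is essentially the paper's own: the paper gives no explicit proof, stating only that the lemma is ``an immediate consequence of the derivation procedure for the product stochastic process,'' and the pathwise equivalence you spell out --- the deterministic automaton's unique run on the timed behavior $L(x(\cdot,\omega))$ coincides with the $(q(\cdot),v(\cdot))$-trajectory of the product process, so acceptance ($\Occ(\rho)\cap F\ne\emptyset$) is exactly the event of hitting $G$ on the shared probability space $(\Omega,\calF,P)$ --- is precisely that consequence made explicit. The measurability and non-Zenoness caveats you flag are genuine issues that the paper silently assumes away rather than defects in your reasoning.
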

By Lemma~\ref{lma:relate}, we can define a value function in the product
stochastic process to characterize the probability of satisfying
$\varphi$ in the dense-time semantics.
\paragraph*{\bf Dense-time reachability probability}
The probability of
reaching $G$ from a product state $s \notin G$ under a controller
$u(\cdot)$ is denoted as $P_{s, u}( G)$.
We construct a reward function $r: S \rightarrow \{1,0\}$ such
that $r(s)=1_G(s)$ where $1_A(\cdot)$ is the indicator function, i.e.,
$1_A(x)=1$ if $x\in A$, and $1_A(x)=0$ otherwise.
Then, $P_{s,u}( G)$ is evaluated by the value function
\[
P_{s,u}( G) = W(s, u)= E^u_{s}\left\{ \int_0^T r(s(t) )dt \right\},
\]
where $T$ is a random variable describing the stopping time such that
$T= \inf_{t\ge 0}(s(t) \in G)$.  



The optimal value function is defined as
$ V (s) = \sup_{u \in \Pi} W(s,u)$, where $\Pi$ is the set of all
admissible control policies for the \ac{sde} in \eqref{sde}.

So far, we have shown that given $x(\cdot), u(\cdot)$ that solve the
\ac{sde} in \eqref{sde}, the probability that a sample path in the
stochastic process $x(\cdot)$ satisfies the \ac{mitl} formula
$\varphi$ in the dense-time semantics can be represented by the value
$W(s(0),u)$ in the derived product stochastic process $s(\cdot)$ under
the reward function $r:S\rightarrow
\{1,0\}$.

\subsection{Markov chain approximation}
In this section, we employ the methods in \cite{kushner2001numerical}
to compute locally consistent Markov chains that approximate the SDE
in \eqref{sde} under a given control policy.

Given an approximating parameter $h$, referred to as the \emph{spatial
  step}, we obtain a discretization of the bounded state space,
denoted by $X^h$, which is a finite set of discrete points
approximating $X$. Intuitively, the spatial step $h$ characterizes the
distance between neighboring and introduces a partition of $X$. The
set of points in the same set of the partition is called an
\emph{equivalent class}.  For each $x\in X^h$, the set of points in
the same equivalent class of $x$ is denoted
$[x]=\{x'\in X\mid x \le x'< x+h\} $.  We call $x\in X^h$ the
representative point of $[x]$. 

We define an \ac{mdp} $M^h = \langle X^h , U, P^h, x_0 \rangle$ where
$X^h$ is the discrete state space. $U$ is the input space, which can
be infinite.  $P^h: X^h \times U \times X^h \rightarrow [0,1]$ is the
transition probability function (defined later in this section).  The
initial state is $x_0\in X^h$ such that the initial state $x(0)$ of
the \ac{sde} in \eqref{sde} satisfies $x(0)\in [x_0]$.

\begin{definition} \cite{kushner2001numerical} Let $\Delta t_i^h$ be
  the interpolation interval at step $i$ for $i \ge 0$. Let $t_0^h=0$
  and $t_n^h= \sum_{i=0}^{n-1} \Delta t^h_i$ for $n\ge 1$ be
  interpolation times.  The \emph{continuous interpolations}
  $x^h(\cdot), u^h(\cdot)$ of the stochastic processes
  $\{x^h_n, n \in \bbZ \}$ and $\{u_n^h, n \in \bbZ\}$ under the
  interpolation times $\{ t_n^h, n \in \bbZ \}$ are
  $x^h(t) = x_n^h, u^h(t) = u_n^h, \text{ for all } t \in [t_n^h,
  t_{n+1}^{h}).$
\end{definition}

Given a policy $\{u_n, n \in \bbZ\}$, let $\{x_n, n\in \bbZ\}$ be the
induced Markov chain from $M^h$ by such a policy.  It is shown that if
a certain condition is satisfied by the spatial step and the
interpolation times, the continuous interpolations of
$\{x_n, n\in \bbZ\}$ and $\{u_n, n \in \bbZ\}$ converges to processes
$x(\cdot)$ and $u(\cdot)$ which solve the \ac{sde} in \eqref{sde}.


\begin{theorem}
\label{thm:stateconverge}
\cite{kushner2001numerical} Suppose Assumption \ref{assumption1}
holds.  
For any policy $\{u_n^h, n \in \bbZ\}$, let the chain induced from
$M^h$ by this policy be $\{x^h_n, n \in \bbZ\}$.  Let $E_{x,n}^{h,a}$
denote the conditional expectation given
$\{x^h_i, u^h_i, 0 \le i<n, x^h_n=x,u^h_n=a \}$. Then, for all
$x\in X$ and $a\in U$, the chain $\{x^h_n, n \in \bbZ\}$ satisfies the
\emph{local consistency condition}:
\begin{align*}
& E_{x,n}^{h,a}(\Delta x^h_n )= f(x,a) \Delta t^h(x,a ) + o(\Delta t^h(x,a)),\\
& E_{x,n}^{h,a} \left(\left[\Delta x^h_n -  E_{x,n}^{h,a}\Delta x^h_n
\right]\left[\Delta x^h_n -  E_{x,n}^{h,a}\Delta x^h_n
\right]'\right)\\
&= g(x) g'(x)\Delta t^h(x,a)+o(\Delta t^h(x,a)),\\
& \sup_{n,\omega}\norm{\Delta x_n^h } _2 \xrightarrow{h} 0,
\end{align*}  
where $\Delta x^h_n = x_{n+1}^h- x_n^h$ is the difference and
$\Delta t^h(x,a)$ is an appropriate interpolation interval for
$x\in X$ and $a\in U$.  As $h\rightarrow 0$, the continuous
interpolations $x^h(\cdot), u^h(\cdot)$ of $\{x^h_n, n \in \bbZ \}$
and $\{u_n^h, n \in \bbZ\}$ under the interpolation times
$\{ t_n^h, n \in \bbZ \}$ computed from the interpolation intervals
$\Delta t_n^h= \Delta t^h(x_n^h ,u_n^h)$, $n \in \bbZ$, converge in distribution to
$x(\cdot), u(\cdot)$ which solve the \ac{sde} in
\eqref{sde}.\end{theorem}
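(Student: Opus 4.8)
The statement is the foundational convergence result for the Markov chain approximation method, so the plan is to follow the martingale-problem approach of Kushner and Dupuis. First I would verify the three \emph{local consistency} identities directly from the construction of the transition kernel $P^h$: the grid $X^h$ and the probabilities $P^h(x,a,\cdot)$ are chosen (by an upwind/finite-difference scheme, with $\Delta t^h(x,a)$ the induced interpolation interval) precisely so that the one-step conditional mean increment is $f(x,a)\Delta t^h(x,a)+o(\Delta t^h(x,a))$, the one-step conditional covariance is $g(x)g'(x)\Delta t^h(x,a)+o(\Delta t^h(x,a))$, and, since each step moves to an adjacent grid point whose spacing is $O(h)$, the increments satisfy $\sup_{n,\omega}\norm{\Delta x_n^h}_2\xrightarrow{h}0$. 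Assumption~\ref{assumption0} is used here to keep $X^h$ finite and, together with the boundedness in Assumption~\ref{assumption1}, to bound $\Delta t^h(x,a)$ below on compacta so the interpolation times $t_n^h$ march to infinity.

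Next I would establish tightness of the continuous interpolations $\{x^h(\cdot)\}$ in the Skorokhod space $D([0,\infty);\bbR^n)$. Writing $x^h(t)-x^h(0)=B^h(t)+M^h(t)+\varepsilon^h(t)$, where $B^h$ collects the conditional-mean terms $\sum f(x_i^h,u_i^h)\Delta t_i^h$, $M^h$ is the martingale formed from the centered increments, and $\varepsilon^h$ is the $o(\Delta t^h)$ remainder, the first local consistency condition plus boundedness of $f$ gives a uniform bound on $E\norm{x^h(t)}^2$ over bounded time intervals, and the second gives $\langle M^h\rangle_t\approx\int_0^t g(x^h(s))\,g'(x^h(s))\,ds$, hence a modulus-of-continuity estimate of Aldous type. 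The vanishing-jump condition forces every weak limit point to have continuous sample paths. Tightness of $(x^h(\cdot),u^h(\cdot))$ then follows once the controls are handled as relaxed (measure-valued) controls on $U$, which are automatically tight since $U$ is compact.

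For a weakly convergent subsequence with limit $(x(\cdot),u(\cdot))$, I would pass to the limit in the decomposition above: $B^h(\cdot)$ converges to $\int_0^\cdot f(x(s),u(s))\,ds$ using the uniform continuity of $f$ on $X\times U$ and the vanishing jumps (to justify replacing $x_i^h$ by $x^h(s)$ inside the sum), $\varepsilon^h\to 0$ uniformly on bounded intervals, and $M^h$ converges to a continuous martingale with quadratic variation $\int_0^\cdot g(x(s))g'(x(s))\,ds$. Thus $x(\cdot)$ solves the martingale problem for $\mathcal{L}\psi=f\cdot\nabla\psi+\tfrac12\,\mathrm{tr}(g\,g'\nabla^2\psi)$ driven by the limiting control; representing the martingale part as a stochastic integral against a Wiener process yields the integral equation \eqref{integral}. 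Finally, the Lipschitz hypotheses in Assumption~\ref{assumption1} give pathwise uniqueness, hence uniqueness in law, for the \ac{sde} in \eqref{sde}, so the limit does not depend on the subsequence and the entire family $(x^h(\cdot),u^h(\cdot))$ converges in distribution to $(x(\cdot),u(\cdot))$.

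The main obstacle is the treatment of the control in the last step: because $\{u^h(\cdot)\}$ need not converge as ordinary $U$-valued functions, one must work with relaxed controls, prove joint tightness of $(x^h,\text{relaxed }u^h)$, verify that the limiting relaxed control is admissible (non-anticipative with respect to the limit filtration generated by $x(\cdot)$ and the limiting driving noise), and carry the martingale-problem identification through with the relaxed control in place. The remaining delicate point is showing that all error terms — the $o(\Delta t^h)$ remainders and the discrepancy between $\sum f(x_i^h,u_i^h)\Delta t_i^h$ and $\int_0^t f(x^h(s),u^h(s))\,ds$ — vanish uniformly on bounded time intervals, for which the uniform smallness of the jumps and the uniform continuity of $f$ on the compact set $X\times U$ are exactly what is needed.
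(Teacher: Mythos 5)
The paper does not prove this theorem: it is quoted verbatim, with citation, from Kushner and Dupuis \cite{kushner2001numerical}, so there is no in-paper argument to compare against. Your outline faithfully reproduces the standard proof from that reference --- verification of local consistency from the upwind construction of $P^h$, tightness of the interpolated processes via the drift/martingale/remainder decomposition, identification of weak limits through the martingale problem with relaxed controls, and representation of the limit as a solution of the \ac{sde} --- so in substance you are giving the same proof as the source the paper relies on. One small over-claim: pathwise uniqueness under Assumption~\ref{assumption1} gives uniqueness in law only for a \emph{fixed} control process, so it does not by itself rule out different subsequences converging to solutions driven by different limiting (relaxed) controls; the theorem is properly read, as in Kushner--Dupuis, as asserting that every weak limit point of $(x^h(\cdot),u^h(\cdot))$ is a solution pair of \eqref{sde}, which is all that the subsequent value-function convergence argument needs.
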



Given a spatial step $h$, under the local consistency condition we
construct the \ac{mdp} $M^h$ over the discrete state space $X^h$ by
computing the transition probability function $P^h$ from the
parameters of the \ac{sde} (see \cite{kushner2001numerical} for the
details). If the diffusion matrix $g(x)g(x)^T$ is diagonal, then the
transition probabilities are:
$ P^h(x,a, x\pm h_ie_i)= \Delta t(x,a) \cdot \left[\frac{
    (g(x)g'(x))_{ii}}{2h_i^2}+\frac{ f_i^{\pm}(x,a)}{h_i}\right], $
and
$ P^h(x,a,x) = 1- \Delta t(x,a) \cdot \sum_{i=1}^n\left[\frac{
    (g(x)g'(x))_{ii}}{h_i^2}+\frac{ \abs{f_i(x,a)}}{h_i}\right], $
where $e_i$ is the unit vector in the $i$-th direction and
$f_i^{\pm}(x,a)=\max(\pm f_i(x,a),0)$.


\subsection{Optimal planning with the discrete approximation}
In this section, we construct a product \ac{mdp} from a discrete
approximation of the original system and the timed automaton
expressing the system specification.  Then, an optimal planning
problem is formulated in a product \ac{mdp} for computing a
near-optimal policy for the \ac{sde} in \eqref{sde} with respect to
the probability of satisfying the \ac{mitl} specification in the
point-based semantics.  



Given the timing constraints in \ac{mitl}, we consider an explicit
approximation method that discretizes both the continuous state space
and time. Particularly, instead of computing potentially varying
interpolation intervals, we choose a constant interpolation interval
$\delta$, referred to as the \emph{time step}. For the local
consistency condition to hold, it is required that for a given
$h\in \bbR^n$, 
\begin{equation}
\label{eq:constraintdelta}\delta \le \frac{1}{\sum_{i=1}^n\left[\frac{ (g(x)
      g'(x))_{ii}}{h_i^2} +\frac{\abs{f_i(x,a)}}{h_i}\right] },
\forall x\in X, \forall a\in U.\end{equation}
Furthermore, $\delta$ is used as the parameter to discretize the clock
vector space $\calV$. Let
$\calV_i^\delta = \{k \delta \mid 0\le k \le \lceil
\frac{\max(\calV_i)}{\delta} \rceil\}$
be the discretized space for the range $\calV_i$ of clock $c_i \in
C$.
The discretized clock vector space is
$\calV^\delta = \Pi_{i=1,\ldots,M} \calV_i^\delta$.  Since both $X$
and $\calV$ are bounded, sets $X^h$ and $\calV^\delta$ are both
finite.  The method is ``explicit'' given the fact that the advance of
clock values are explicit: At each step $n$, if the clock is not reset
to $0$, then its value is increased by the interpolation interval
$\delta$. 

Let $d=(h,\delta)$ denote a tuple of spatial and time steps. Next, we
construct a product \ac{mdp}
$\calM^d = M^h \times \calA_\varphi = \langle S^d, U, P^d, s_0
\rangle$
where $S^d = X^h \times Q\times \calV^\delta $ is the discrete product
state space, $U$ is the input space,
$P^d: S^d\times U \times S^d\rightarrow [0,1] $ is the transition
probability function, defined as follows. Let $s= (x,q,v)$ and
$s'=(x',q',v')$. For any $a\in U$, $P^d(s,a,s')= P^h(x,a,x')$ if and
only if $(q,v)\xrightarrow{\delta, L(x')} (q',v')$. Otherwise
$P^d(s,a,s')=0$. The initial state is $s_0 =(x_0,q_0,\bm 0 )$ with
$(\init, \bm 0) \xrightarrow{0,L(x_0)} (q_0, \bm 0)$.

\begin{assumption}
\label{assume:equivalent}
There exists a spatial step $h\in \bbR^n$ and a choice of
representative points from $X$ such that for all $x\in X^h$ and all
$x'\in [x]$, $L(x')=L(x)$. 
\end{assumption}
\begin{lemma}
  \label{lma1}Under Assumption~\ref{assume:equivalent}, given
  $x(\cdot), u (\cdot)$ that solve the SDE in \eqref{sde} and a
  discretization $X^h$ of the state space, we construct a discrete
  chain $\{S_n, n \in \bbZ\}$ as follows:
  $S_0 = s_0 =(x_0,q_0, \bm 0)$ with $x(0)\in [x_0]$ and
  $(\init, \bm 0) \xrightarrow{0,L(x_0)} (q_0, \bm 0)$; for all
  $n \in \nat$, $S_n = (x^h_n, q_n, v_n)$ where $x_n^h \in X^h$ is the
  representative point to which $x(n\delta)$ belongs, i.e.,
  $x(n\delta)\in [x_n^h]$, and
  $(q_n, v_n)\xrightarrow{\delta, L(x_{n+1}^h)} (q_{n+1}, v_{n+1})$.
  The following two statements hold.
\begin{enumerate}
\item For all $n \in\bbZ$, the range of the random variable $S_n$ is
  $S^d$.
\item The probability of a continuous sample path in
  $\{x(t), t\ge 0\}$ satisfying $\varphi$ in the point-based semantics
  under the sampling interval $\delta$ equals the probability of a
  discrete sample path in the chain $\{S_n, n \in \bbZ\}$ hitting the
  set $G^d = X^h\times F\times \calV^\delta$. That is,
\begin{multline*}
  P_{x(0)} \left( [L(x(\cdot))]^\delta\models \varphi  \right) =\\
  P_{s_0}(S_k \in G^d\text{ and } \forall  j<k, S_j \notin G^d).
\end{multline*}
\end{enumerate}
\end{lemma}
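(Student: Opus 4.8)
The plan is to prove the two claims separately. The first claim will follow from a short induction on $n$, tracking only the clock coordinate (the first two coordinates of $S_n$ lie in $X^h$ and $Q$ by construction). The second claim will follow from a \emph{pathwise} argument: for each sample $\omega$, the automaton coordinates of the chain $\{S_n\}$ reproduce exactly the unique run of $\calA_\varphi$ on the sampled timed word $\calT(b^\delta)$ of the behavior $b=L(x(\cdot,\omega))$, so that ``$b^\delta\models\varphi$'' and ``the chain reaches $G^d$'' are the same subset of $\Omega$; the probability identity is then obtained by passing to measures and partitioning the reaching event by its first hitting time.

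For (1): the base case is $v_0=\bm 0\in\calV^\delta$, since $0=0\cdot\delta\in\calV_i^\delta$ for each $i$. For the step, the defining relation $(q_n,v_n)\xrightarrow{\delta,L(x_{n+1}^h)}(q_{n+1},v_{n+1})$ sets each coordinate $v_{n+1}[i]$ to $0$ (if that clock is reset) or to $v_n[i]+\delta$, so by the induction hypothesis $v_{n+1}[i]$ is again a nonnegative integer multiple of $\delta$. That it also satisfies $v_{n+1}[i]\le\max(\calV_i)$ — which, with the multiple-of-$\delta$ property, gives $v_{n+1}[i]\in\calV_i^\delta$ and hence $S_{n+1}\in S^d$ — uses that along any run of $\calA_\varphi$ the clock $c_i$ stays in its range $\calV_i$ (bounded by Assumption~\ref{assumption0}); this holds for the timed automata built from reachability \ac{mitl} formulas, and is precisely the property that makes $\calV=\prod_i\calV_i$, and therefore $S^d$, a valid finite state space.

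For (2): fix $\omega$, put $b=L(x(\cdot,\omega))$, and recall that $b^\delta\models\varphi$ means the timed word $\calT(b^\delta)=(0,b(0))(\delta,b(\delta))(2\delta,b(2\delta))\cdots$ is accepted by $\calA_\varphi$, i.e.\ its unique (by determinism) run $\rho$ satisfies $\Occ(\rho)\cap F\ne\emptyset$. The main step is to verify, by induction on $n$, that
\[
\rho=(\init,\bm 0)\xrightarrow{0,\,b(0)}(q_0,v_0)\xrightarrow{\delta,\,b(\delta)}(q_1,v_1)\xrightarrow{\delta,\,b(2\delta)}\cdots,
\]
where $(q_n,v_n)$ is exactly the $Q\times\calV^\delta$ part of $S_n(\omega)$: by Assumption~\ref{assume:equivalent}, $b(n\delta)=L(x(n\delta,\omega))=L(x_n^h(\omega))$, so the symbol $\rho$ reads at step $n$ is the very symbol driving $S_{n-1}\to S_n$ (with $S_0$ obtained from $(\init,\bm 0)\xrightarrow{0,L(x_0)}(q_0,\bm 0)$); the delays match ($\Delta\tau_0=0$ and $\Delta\tau_n=\delta$ for $n\ge 1$, exactly the delays in the recursion defining $\{S_n\}$); and by determinism each successor configuration is uniquely determined, hence equal to the one recorded in $S_n$. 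Consequently $\Occ(\rho)\cap F\ne\emptyset$ iff some $q_n\in F$, which by (1) is equivalent to $S_n(\omega)\in G^d$ for some $n$ (the other two coordinates of $S_n$ lying automatically in $X^h$ and $\calV^\delta$); here one uses $\init\notin F$, which holds for every $\calA_\varphi$ coming from a reachability formula. Thus, as subsets of $\Omega$ (under the matched initial data $x(0)\in[x_0]$, $s_0=(x_0,q_0,\bm 0)$),
\begin{multline*}
\{[L(x(\cdot))]^\delta\models\varphi\}=\{\omega:\exists n,\ S_n(\omega)\in G^d\}\\
=\bigsqcup_{k\ge 0}\{\omega:\ S_k(\omega)\in G^d,\ S_j(\omega)\notin G^d\ \forall j<k\},
\end{multline*}
and since $\omega\mapsto S_n(\omega)$ is measurable (built from the measurable process $x(\cdot)$ via the representative-point map and the delayed-transition rule of $\calA_\varphi$), taking probabilities of both ends gives the asserted equality.

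The step I expect to be the main obstacle is the bookkeeping in (2): getting the indexing and the delay-$0$ versus delay-$\delta$ transitions to line up exactly between $\rho$ and the recursion for $\{S_n\}$, and making sure $\rho$ — equivalently the chain — is everywhere defined, which requires $\calA_\varphi$ to be complete (e.g.\ via a sink state $\sink$); that same completeness, together with the range property of $\calA_\varphi$, is what kept the clock vectors inside $\calV$ in step (1).
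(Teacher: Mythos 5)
Your proposal is correct and follows essentially the same route as the paper's own proof: an induction on $n$ showing each clock coordinate stays a multiple of $\delta$ for statement (1), and, for statement (2), a pathwise identification of the automaton coordinates of $\{S_n\}$ with the unique run of $\calA_\varphi$ on the sampled timed word $(\tau, L(\pi_1(\mathsf{p})))$ via Assumption~\ref{assume:equivalent} and determinism, followed by the first-hitting-time decomposition. The extra points you flag (clocks staying within the bounded range $\calV_i$, completeness of the transition function, $\init\notin F$, measurability of $S_n$) are legitimate details the paper leaves implicit, but they do not change the argument.
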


\begin{proof}
  To show the first
  statement, 
  initially, $v(0)=\bm 0 $ is a vector of zeros, which is in
  $\calV^\delta$. Suppose that at the $n$-th sampling step
  $v(n\delta) \in V^\delta$, at the next sampling step, for any clock
  $c_i \in C$, either the value of $c_i$ is increased by $\delta$ or
  it is reset to $0$ depending on the current state in the automaton,
  the clock vector and the current labeling of state in $X$.  If the
  value of $c_i$ is reset to $0$,
  $v((n+1)\delta)[i]=0 \in \calV_i^\delta$. Otherwise,
  $v((n+1)\delta)[i]=v(n\delta)[i]+\delta \in \calV_i^\delta $.  By
  induction, all possible clock vectors we can encounter at the
  sampling times are in the set $\calV^\delta$. Thus, the range of a
  random variable $S_n$ for any $n \in \bbZ$ is $S^d$, which is a
  subset of $X^h\times Q\times \calV$.

  Let $x(\cdot,\omega)$ with $\omega \in \Omega$ be a sample path of
  the process $x(\cdot)$ and $\mathsf{p}= s_0s_1\ldots $ be the
  corresponding sample path of the chain $\{S_n, n \in \bbZ\}$ given
  the construction method above. Remind that $x(\cdot,\omega)$
  satisfies $\varphi$ in the point-based semantics under the sampling
  interval $\delta$ if and only if the timed word $\calT(b^\delta) $,
  where $b(\cdot) = L(x(\cdot, w))$, is accepted in $\calA_\varphi$.
  Since $x(i\delta,\omega) \in [\pi_1(s_i)] $ for all $ i \ge 0$, let
  $\tau = 0\ \delta\ 2\delta \ldots$, we have that the timed word
  $\calT(b^\delta) = (\tau, L(\pi_1(\mathsf{p})))$ by
  Assumption~\ref{assume:equivalent} . Let the run on the timed word
  $(\tau, L(\pi_1(\mathsf{p})))$ be $\rho $ such that
  $\rho= (\init, \bm 0)\xrightarrow{0, L(\pi_1(s_0))}(q_0,
  v_0)\xrightarrow{\delta, L(\pi_1(s_1))}(q_1,v_1)\ldots$.
  By construction, it holds that $q_i = \pi_2(s_i)$ and
  $v_i = \pi_3(s_i)$ for all $i \in \bbZ$. By definition of the
  acceptance condition in $\calA_\varphi$,
  $(\tau, L(\pi_1(\mathsf{p})))$ is accepted in $\calA_\varphi$ if and
  only if $\Occ(\rho)\cap F \ne \emptyset$, which is equivalent to say
  that for some $k \in \bbZ$, $s_k \in X^h\times F\times \calV$ and
  for all $j<k$, $s_j \notin X^h\times F\times \calV$. Since in the
  first statement we have shown that the range of $S_n$ for all
  $n \in \bbZ$ is $S^d$,
  $s_k \in X^h\times F\times \calV^\delta = G^d$ and the proof for the
  second statement is complete.
\end{proof}

Lemma~\ref{lma1} characterizes the probability of satisfying the
specification in point-based semantics under the sampling interval
$\delta$ with the probability of reaching a set $G^d$ in the product
\ac{mdp} $\calM$. Given the objective of maximizing the probability of
reaching a set in an \ac{mdp}, there exists a memoryless
and deterministic policy such that by following this policy, from any
state, the probability of reaching the set is maximized
\cite{Baier2008}.

We introduce a state $\sink$ into the product \ac{mdp} $\calM^d$ and
modify $P^d$ such that for all $s \in G^d$ and all $a\in U$,
$ P^d(s, a,\sink ) = 1$, and for all $a\in U$,
$P^d(\sink, a, \sink)=1$, while the other transition probabilities
remain unchanged. The product \ac{mdp} $\calM^d$ with the augmented
state set and the modified transition probability function is denoted
$\hat \calM^d$.  The reward function
$R: S^d \cup\{\sink\} \rightarrow \bbR $ is defined by
$R(s )=1_{G^d}(s) $.  Let $u: S^d \cup\{\sink\} \rightarrow U$ be a
memoryless and deterministic policy in $\hat \calM^d$ and $\Pi^d$ the
set of all such policies in $\hat \calM^d$.  The value function of
policy $u$ is
\[ W^d(s, u) = E_{s}^u \left[ \sum_{i=0}^\infty R(S_i) \right],\]
where $\{S_n, n\in \bbZ\}$ is the Markov chain induced from
$\hat \calM^d$ with policy $u$.  Thus, the optimal value function
$V^d(s) = \max_{u \in \Pi^d} W^d(s, u)$, and the dynamic programming
equation is obtained: For $s\in S^d$,
\begin{align*}
\begin{split}
\label{eq:optfun}
&V^d(s)=  R(s)+ \max_{a\in U}\left[ \sum_{s'\in S^d \cup\{\sink\}} P^d(s, a,s') \cdot
  V^d(s') \right],\\
&\text{ and } V^d(\sink)=0.
\end{split}
\end{align*}


Given the optimal policy
$\hat{u}^\ast: S^d \cup\{\sink\} \rightarrow U$ that achieves the
maximum value of $V^d(s)$ for all $s\in S^d$ in the modified product
\ac{mdp} $\hat \calM^d$, we derive a policy $u^\ast:S^d\rightarrow U$
by letting $u^\ast(s)= \hat{u}^\ast(s)$. By the definition of reward
function and the modified product \ac{mdp}, policy $u^\ast$ maximizes
the probability of hitting the set $G^d$ in $\calM^d$.

A policy $u:S^d\rightarrow U$ is implemented in the original system in
\eqref{sde} in the following manner. The initial product state is
$(x(0), q_0, \bm 0)$ with
$(\init, \bm 0)\xrightarrow{0, L(x(0))} (q_0,\bm 0)$. At each sampling
time $n\delta, n\in \bbZ$, let the current product state be
$(x, q,v) \in S$. We compute $(x^h, q, v) \in S $ such that
$x\in [x^h]$. Note that at the sampling time the clock vector is
always in $\calV^\delta$ by Lemma~\ref{lma1} and thus
$(x^h, q,v)\in S^d$, for which $u$ is defined. Then, we apply a
constant input $u((x^h,q, v))$ during the time interval
$[n\delta, (n+1)\delta)$. At the next sampling time $(n+1)\delta$,
according to the current state $x'$, we compute the state in
$\calA_\varphi$ and the clock vector such that
$(q,v)\xrightarrow{\delta, L(x')} (q',v')$.  Hence, the new product
state is $(x',q',v')$ and a constant control input for the interval
$[(n+1)\delta, (n+2)\delta)$ is obtained in the way we just described.

\begin{remark}
  When the input space $U$ for the product \ac{mdp} is bounded, in the
  numerical method for the reward maximization problem in the product
  \ac{mdp}, in general we also discretize the input space $U$ with
  some discretization parameter $\epsilon$. Let $U^\epsilon$ be the
  discretized input space. Given the optimal policy $u^\ast$ for the
  product \ac{mdp} and the optimal policy $u^{\epsilon, \ast}$ in the
  product \ac{mdp} with the input space $U^\epsilon$, one can derive
  the bound on $\abs{W^d(s_0, u^\ast) - W^d(s_0, u^{\epsilon,\ast})}$
  as a function of $\epsilon$, which converges to $0$ as
  $\epsilon \rightarrow 0$
  \cite{fleming2006controlled,kushner2001numerical}.  Thus, with both
  discretized state and input space, the implemented policy is
  near-optimal for the SDE in \eqref{sde} with respect to the
  probability of satisfying the \ac{mitl} specification in the
  point-based semantics.
\end{remark}

 \subsection{Proof of convergence}
 Based on Theorem~\ref{thm:stateconverge}, we show that the optimal
 policy synthesized in the product \ac{mdp} converges to the
 optimal policy that achieves the maximal probability of satisfying
 the \ac{mitl} specification in the dense-time semantics as the
 discretization in both state space and time space get finer.
\begin{theorem}
  Given a discretization parameter $d= (h,\delta)$ where $\delta$
  satisfies the local consistency condition in
  \eqref{eq:constraintdelta} with respect to the spatial step $h$, it
  holds that
  \[ \lim_{h \rightarrow 0}V^d(s_0) = V(s(0)),
\]
where $s_0 = (x_0, q_0, \bm 0)$, $s(0) = (x (0), q_0, \bm 0)$ and
$x(0)\in [x_0]$.
\end{theorem}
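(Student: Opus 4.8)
The plan is to prove the two one-sided estimates $\limsup_{h\to 0} V^d(s_0)\le V(s(0))$ and $\liminf_{h\to 0} V^d(s_0)\ge V(s(0))$ and combine them. Two structural facts set up the argument. First, the constraint \eqref{eq:constraintdelta} forces $\delta\to 0$ as $h\to 0$, so this is a joint limit in space and time, and accordingly the point-based satisfaction relation under sampling interval $\delta$ tends to the dense-time relation --- this is precisely the point-based/dense-time correspondence recorded in the excerpt, $b\models\varphi$ iff $\lim_{\delta\to 0}\calT(b^\delta)\in L(\calA_\varphi)$, and studied in \cite{Furia2006,Bouyer2009}. Second, unwinding definitions: by Lemma~\ref{lma1}, for any $u^d\in\Pi^d$ the value $W^d(s_0,u^d)$ equals the probability that the continuous interpolation $x^h(\cdot)$ of the induced chain satisfies $\varphi$ in the point-based semantics under $\delta$, so $V^d(s_0)=\sup_{u^d\in\Pi^d}P_{x(0)}\big([L(x^h(\cdot))]^{\delta}\models\varphi\big)$; and by Lemma~\ref{lma:relate} together with the definition of $W$, $V(s(0))=\sup_{u\in\Pi}P_{x(0)}\big(L(x(\cdot))\models\varphi\big)$. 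So the theorem says that the optimal point-based satisfaction probability computed in the \ac{mdp} converges to the optimal dense-time satisfaction probability of \eqref{sde}, and the driving tool is Theorem~\ref{thm:stateconverge}.

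For the upper bound, fix for each $d=(h,\delta)$ an optimal policy $\hat u^{\ast}$ of $\hat\calM^d$, read off $u^{\ast}$, implement it in \eqref{sde}, and consider the interpolations $x^h(\cdot),u^h(\cdot)$. By Assumptions~\ref{assumption0}--\ref{assumption1} and the local consistency in Theorem~\ref{thm:stateconverge}, $\{(x^h(\cdot),u^h(\cdot))\}_h$ is tight; from any subsequence extract a further subsequence and, via the Skorokhod representation theorem, realize it on a common probability space with $(x^h(\cdot),u^h(\cdot))\to(x(\cdot),u(\cdot))$ almost surely, where $(x(\cdot),u(\cdot))$ solve \eqref{sde} with $u(\cdot)$ admissible (Theorem~\ref{thm:stateconverge}, modulo the usual relaxed-control technicalities). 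Off a null set the sampled behaviors $[L(x^h(\cdot,\omega))]^{\delta}$ eventually agree on any bounded horizon with the $\delta$-sampling of $L(x(\cdot,\omega))$, and since $\delta\to 0$ the acceptance of the associated timed words in $\calA_\varphi$ converges to the dense-time acceptance of $L(x(\cdot,\omega))$ --- \emph{provided} $x(\cdot,\omega)$ is not a degenerate path, i.e.\ one whose label changes or clock readings land on a guard boundary of $\calA_\varphi$ (reaching the target exactly at an integer endpoint of some interval $I$, touching $\partial G$, and so on). Granting that the law of $x(\cdot)$ gives the degenerate paths probability zero, the acceptance indicators converge a.s.; being bounded they converge in mean, so along the extracted subsequence $V^d(s_0)=W^d(s_0,u^{\ast})\to W(s(0),u)\le V(s(0))$, whence $\limsup_{h\to 0}V^d(s_0)\le V(s(0))$.

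For the lower bound, fix $\epsilon>0$ and an admissible control $u(\cdot)$ with $W(s(0),u)\ge V(s(0))-\epsilon$, and approximate it by a control constant on the grid $\{n\delta\}$ and adapted to the sampled filtration, the approximation loss vanishing as $\delta\to 0$ (a standard density step, cf.\ \cite{kushner2001numerical,fleming2006controlled}). This piecewise-constant control defines a policy $u^d\in\Pi^d$; by local consistency the interpolation of the chain it induces converges weakly to the solution of \eqref{sde} under $u(\cdot)$, and a Skorokhod realization together with Lemma~\ref{lma1}, the point-based/dense-time correspondence, and again the zero probability of degenerate paths give $\liminf_{h\to 0}V^d(s_0)\ge\liminf_{h\to 0}W^d(s_0,u^d)\ge W(s(0),u)-o(1)\ge V(s(0))-\epsilon$. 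Letting $\epsilon\to 0$ and combining with the upper bound yields $\lim_{h\to 0}V^d(s_0)=V(s(0))$.

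The hardest step, shared by both directions, will be upgrading weak convergence of the approximating processes to convergence of the \emph{acceptance/reachability} functional, which is only semicontinuous, while simultaneously sending $\delta\to 0$ so the point-based semantics collapse onto the dense-time semantics. Concretely, the lemma I would isolate and prove is that, with probability one, the limiting process neither spends positive time on nor (in the relevant weak sense) touches the boundary of the target $G$ or of the clock guards of $\calA_\varphi$ --- equivalently, that inflating or deflating these sets by $\eta$ perturbs $V$ by $o(1)$ as $\eta\to 0$. This is the continuity-at-the-exit-boundary condition familiar from the Markov chain approximation method and holds under a non-degeneracy condition on $g(x)g(x)^{\top}$ (so that $x(\cdot)$ places no mass on lower-dimensional sets); with it in hand, one sandwiches $V^d$ between the values for the inflated and deflated targets and passes to the limit, reducing both one-sided estimates to routine applications of Theorem~\ref{thm:stateconverge}.
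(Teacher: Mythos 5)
Your proposal follows the same skeleton as the paper's proof: both directions are established exactly as you propose, with $\limsup_{h\to 0} V^d(s_0)\le V(s(0))$ obtained by implementing the optimal policy of the product \ac{mdp} and invoking the weak convergence of Theorem~\ref{thm:stateconverge}, and $\liminf_{h\to 0} V^d(s_0)\ge V(s(0))$ obtained by sampling a (near-)optimal continuous-time control on the grid $\{n\delta\}$ and using the suboptimality of the resulting policy in $\calM^d$. The difference is in how much analytic content is made explicit. The paper asserts in one sentence that, because the interpolations converge in distribution and $\calA_\varphi$ is deterministic, $W^d(s_0,\{u^d_n\})$ converges to $W(s(0),u)$; it does not address the fact that the reachability functional --- the indicator that a path's timed word is accepted, equivalently that the product process enters $G$ --- is only semicontinuous, so convergence in distribution of the processes does not by itself yield convergence of these probabilities. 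You correctly isolate this as the hardest step and name the missing lemma: almost-sure avoidance of the guard boundaries of $\calA_\varphi$ and of $\partial G$ by the limit process (no label change or clock reading landing exactly on an interval endpoint), which is the continuity-at-the-boundary condition of the Markov chain approximation method and in general requires a non-degeneracy hypothesis on $g(x)g(x)^{\top}$ that the paper never assumes. You likewise insert a density step in the lower bound (approximating the optimal admissible control by one piecewise constant on the sampling grid and adapted to the sampled filtration), where the paper simply writes $\{u^\ast(n\delta), n\in\bbZ\}$ as if it were automatically a policy of $\calM^d$. So yours is not a different route but the paper's route with its two real gaps flagged and localized; note, though, that your own argument remains conditional, since the boundary-avoidance lemma is stated but not proved and rests on an assumption absent from the paper.
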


\begin{proof}
  First, it is noted by the local consistency condition and the
  constraint on $\delta$ in \eqref{eq:constraintdelta}, $\delta$ is a
  decreasing function of $h$ and when $h \rightarrow 0$,
  $\delta \rightarrow 0$.

  For a given $d=(h,\delta)$ where $\delta$ satisfies the constraint
  in \eqref{eq:constraintdelta} with respect to $h$, let
  $u^d: S^d \rightarrow U$ be a policy in the product \ac{mdp}
  $\calM^d$ and $\{S^d_n, n \in \bbZ\} $ be the induced Markov chain.
  According to Theorem~\ref{thm:stateconverge}, when
  $h \rightarrow 0$, $\delta \rightarrow 0$ and the continuous
  interpolations of $\{\pi_1(S^d_n), n \in \bbZ\}$ and
  $\{u^d_n = u(S^d_n), n \in \bbZ\}$ converge in distribution to
  $x(\cdot)$ and $u(\cdot)$ that solve the SDE in \eqref{sde}. By the
  determinism in the transition function of the timed automaton and
  the labeling function, as $h\rightarrow 0$, $\{S^d_n, n \in \bbZ \}$
  also converges in distribution to $\{s(t), t\ge 0\}$, which is the
  product stochastic process derived from $\{x(t), t \ge 0\}$.
  According to the definition of reward functions $r$ and $R$, since
  $h \rightarrow 0$, we have $\delta \rightarrow 0$,
  $ x_0 \rightarrow x(0)$ and $W^d(s_0, \{u^d_n, n\in \bbZ\}) $
  converges to $W(s(0), \{u(t), t\ge 0\} )$.

  Now given the optimal control policy $u^{d,\ast}: S^d\rightarrow U$
  obtained for the product \ac{mdp} $\calM^d$, let
  $\{S_n, n \in \bbZ\}$ be the Markov chain induced by $u^{d,\ast}$ in
  $\calM^d$. We have that
  $\lim_{h \rightarrow 0}W^d (s_0, \{u^{d,\ast}(S_n), n\in \bbZ\}) =
  \lim_{h \rightarrow 0}\sup V^d(s_0)\le V(s(0))$
  by the optimality of the value function $V(s(0))$.  On the other
  hand, let $u^\ast = \arg\sup_{u\in \Pi} W(s(0),u)$ be the optimal
  control policy in the continuous-time stochastic system.  We
  construct a policy $\{u^\ast(n\delta), n \in \bbZ \}$ for the
  product \ac{mdp} $ \calM^d$ such that the action $u^\ast(n\delta) $
  is taken at the step $n$.  We have
  $V^d(s_0) \ge W^d(s_0, \{ u^\ast(n\delta), n \in \bbZ \})$ by the
  optimality of $V^d(s_0)$.  Since
  $\lim_{h \rightarrow 0} W^d(s_0, \{u^\ast(n\delta), n \in \bbZ \}) =
  W(s(0), \{u^\ast(t), t\ge 0 \})= V(s(0))$,
  it is inferred that
  $\lim_{h \rightarrow 0} \inf V^d(s_0) \ge V(s(0))$. Therefore,
  $\lim_{h \rightarrow 0 } V^d(s_0) = V(s(0))$.
\end{proof}
\section{Example}

\begin{figure*}[!t]
\centering
    \begin{subfigure}[b]{0.45\textwidth}
\centering
\includegraphics[width=\textwidth]{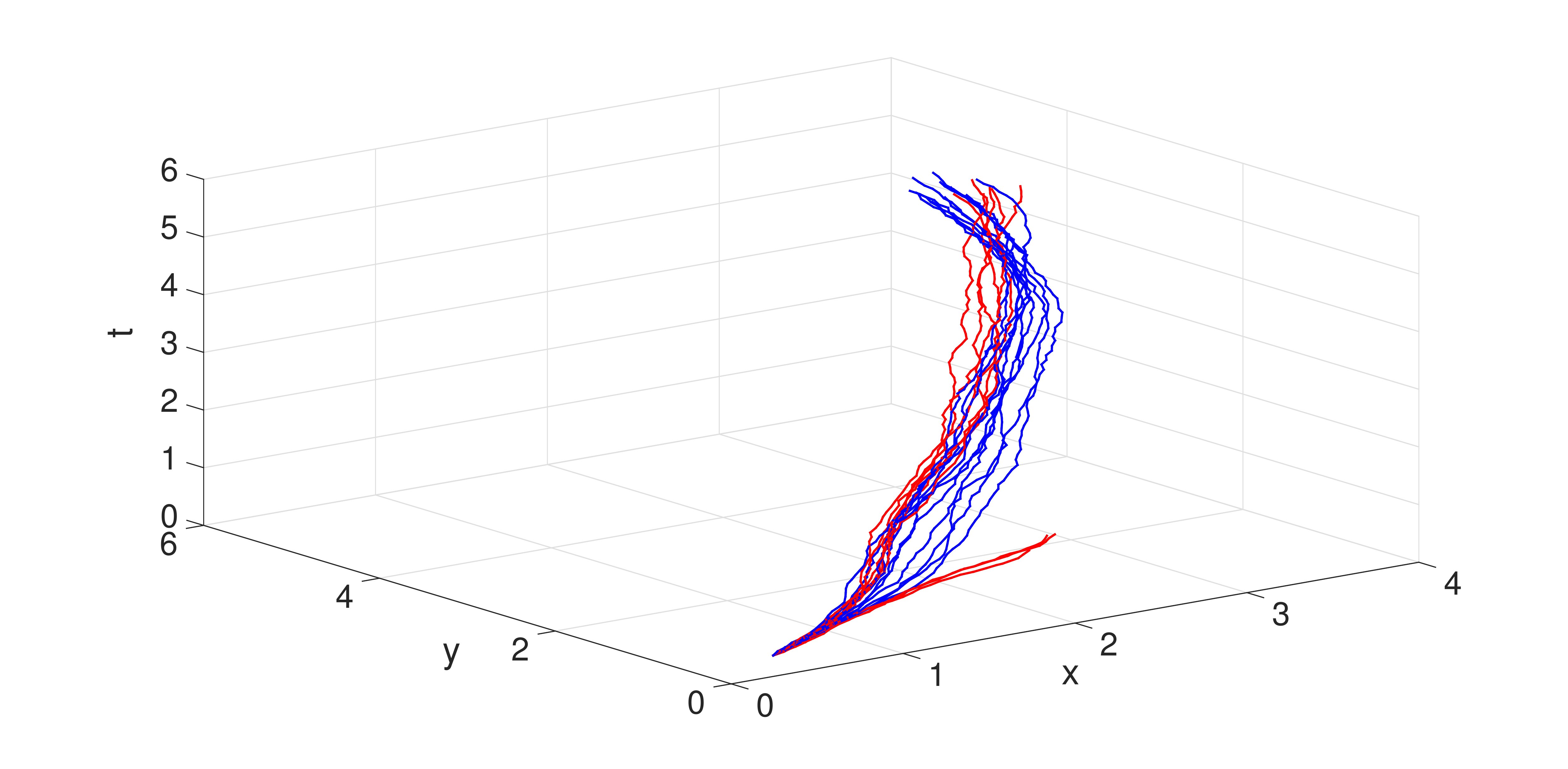}
\caption{3D-view}
\label{fig:3d-view}
\end{subfigure}
    \begin{subfigure}[b]{0.45\textwidth}
\centering
\includegraphics[width=\textwidth]{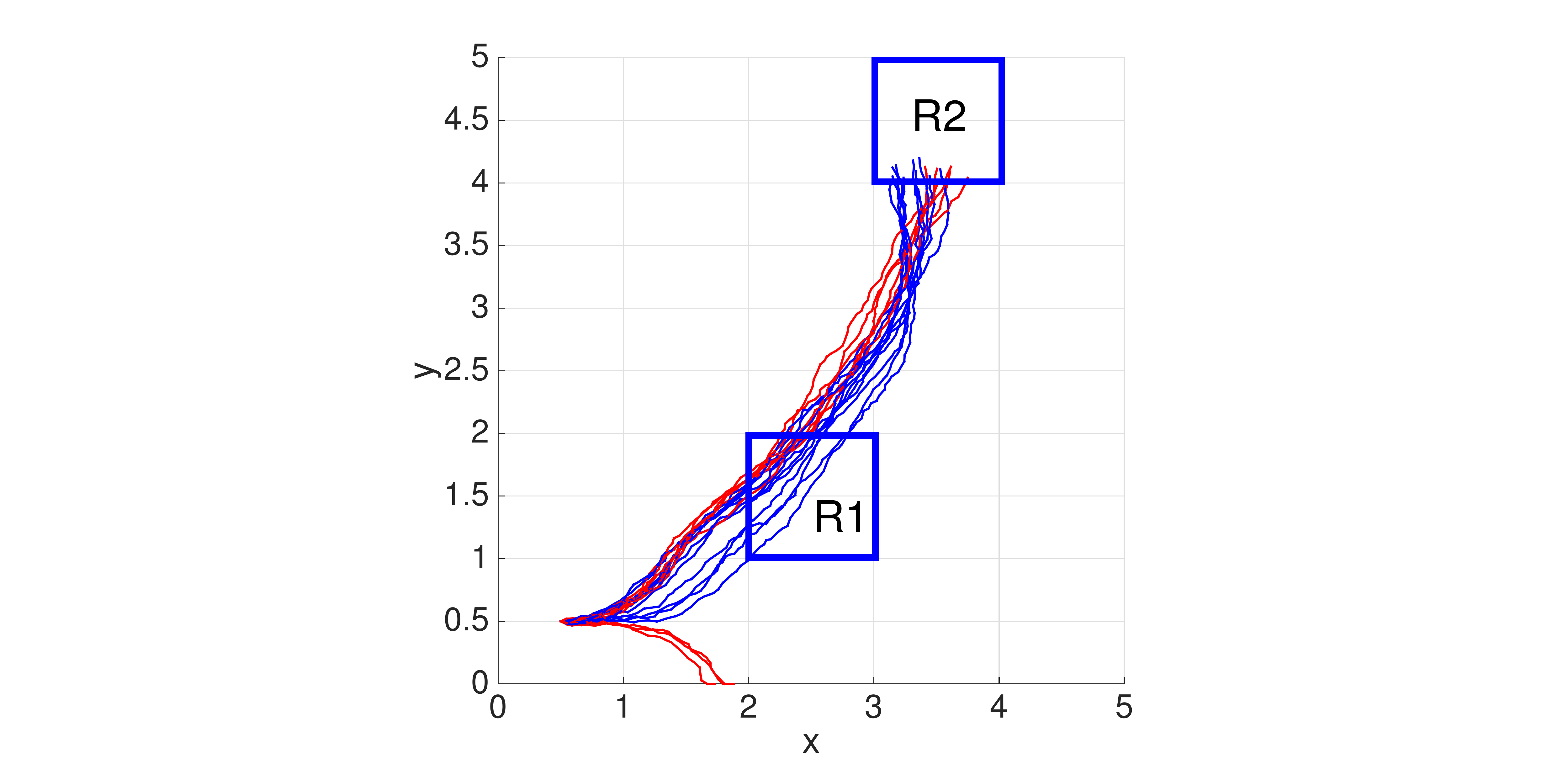}
\caption{x-y view}
\label{fig:xy-view}
\end{subfigure}
    \begin{subfigure}[b]{0.45\textwidth}
\centering
\includegraphics[width=\textwidth]{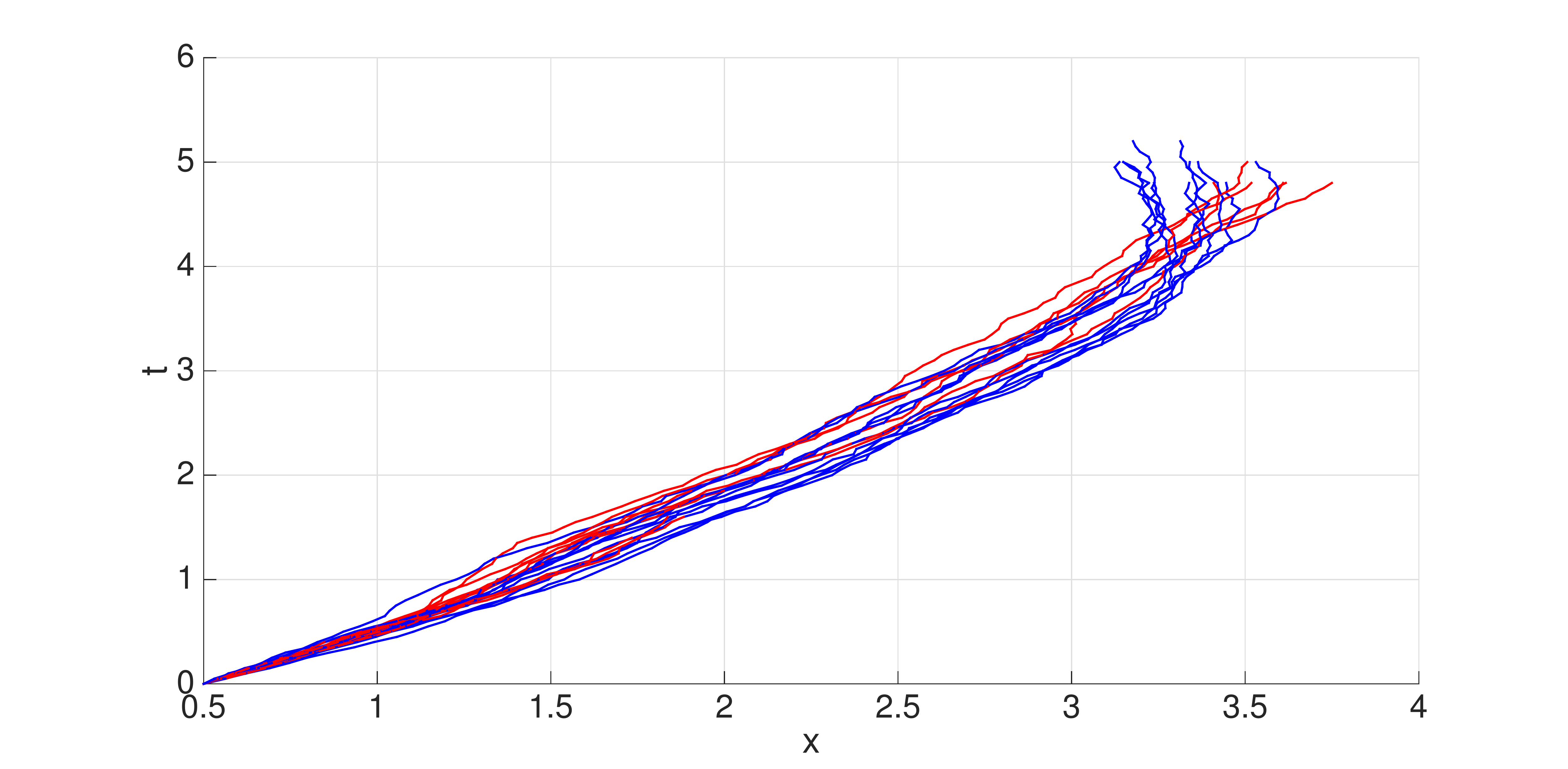}
\caption{x-t view}
\label{fig:xt-view}
\end{subfigure}
    \begin{subfigure}[b]{0.45\textwidth}
\centering
\includegraphics[width=\textwidth]{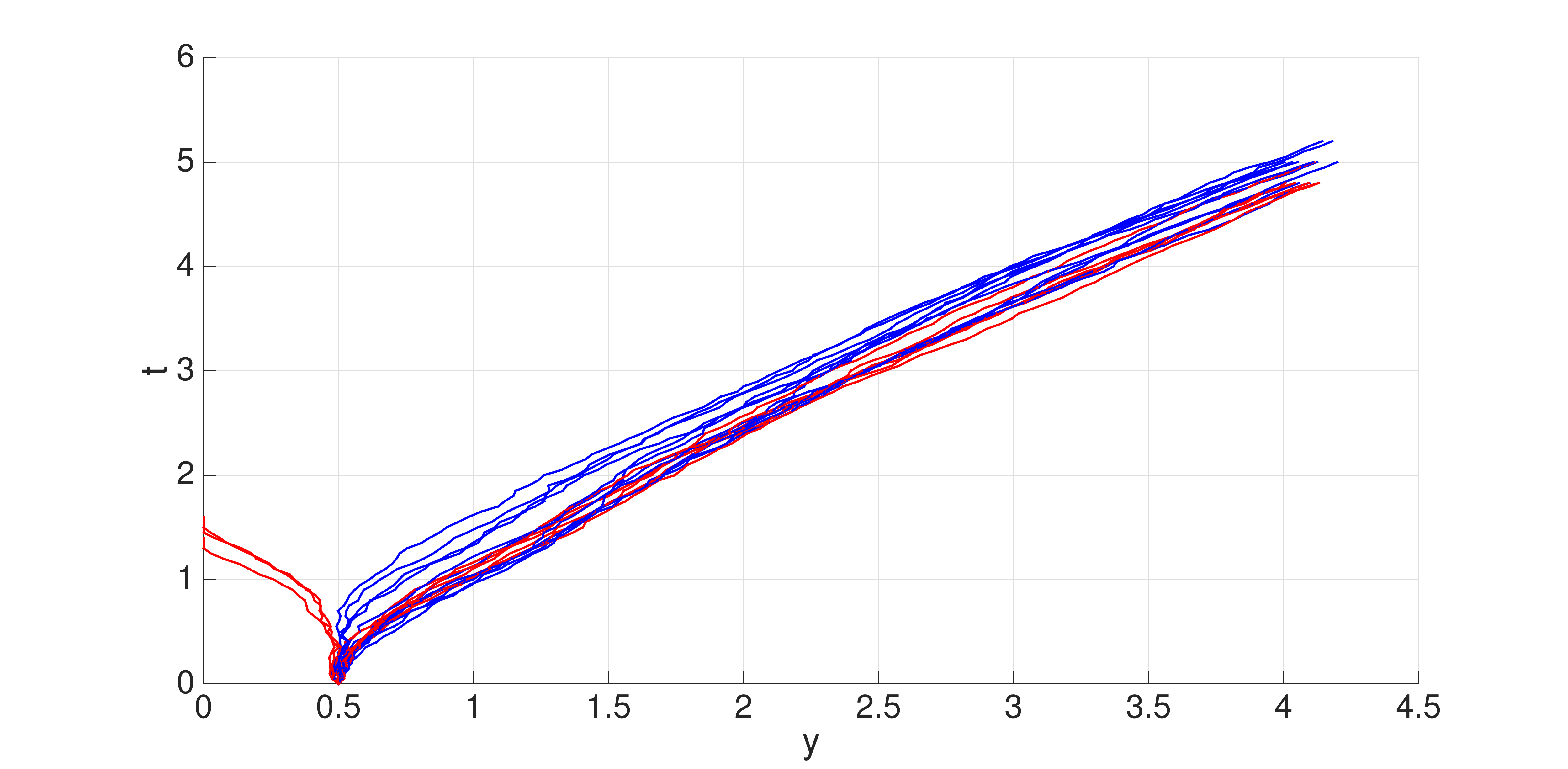}
\caption{y-t view}
\label{fig:yt-view}
\end{subfigure}
\vspace{-2ex}
\begin{subfigure}[b]{0.5\textwidth}
\centering
\includegraphics[width=\textwidth]{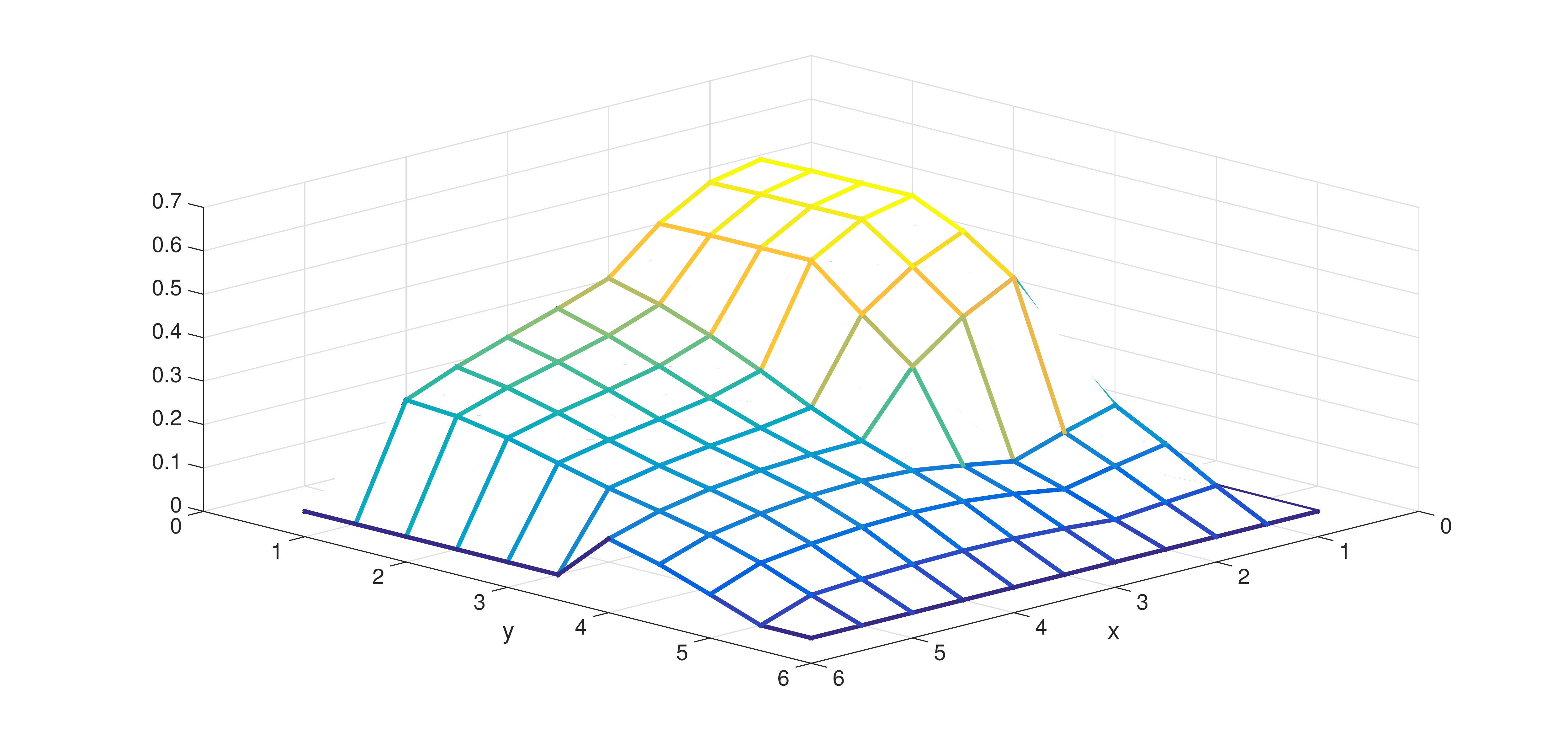}
\caption{Value function}
\label{fig:valuefunction}
\end{subfigure}
\label{fig:simulation}
\caption{(a) -- (d) Total 20 sample paths for the robot starting at
  $x_0=(0.5,0.5,0)^T$ with 3D view, x-y plane view, x-t plane view and
  y-t plane view. The sample path stops whenever the specification is
  satisfied, or it does not meet the specification due to time
  constraints or hitting the walls.  Most of the sample paths that
  fail to satisfy the specification in the point-based semantics reach
  the region $R_2$ prior to the $3$rd time units after visiting
  $R_1$. (e) The value function for the robot with the initial state
  $\bm x_0=(x,y,0)^T \in X^h$, the initial state in the specification
  timed automaton and the initial clock vector $\bm 0$.}
\end{figure*}
This section illustrates the method using a motion planning example
for a robot modeled as a stochastic Dubin's car. The dynamics of the
system are described by the \ac{sde}
\[
\underbrace{\begin{bmatrix} dx(t) \\
dy (t)\\
d \theta(t)
\end{bmatrix} }_{d \bm x(t)} = \underbrace{\begin{bmatrix}
  v(t) \cos \theta(t) \\
  v(t) \sin \theta(t) \\
  u(t)
\end{bmatrix}dt}_{f( \bm x(t),u(t))dt}
+ g(\bm x(t))dw,
\] 
where $\bm x =(x,y, \theta)$ is the coordinate and heading angle of
the robot, $v$ is the linear velocity and $u \in U=[-1,1]$ is the
angular velocity input. In this example, $v=1$ is fixed and
$g(\bm x(t))=0.5 I_3$, and $w(\cdot)$ is a 3-dimensional Wiener
process on the probability space $(\Omega, \calF, P)$.

The workspace of the robot is depicted in Figure~\ref{fig:xy-view},
with two regions $R_1$ and $R_2$ of importance. The workspace is
constrained by the walls
$\{(x,y)\mid x\in \{0,5\}, 0\le y\le 5\} \cup \{(x,y)\mid 0\le x \le
5, y \in \{0,5\}\}$.

The objective of the robot is to maximize the probability of visiting
region $R_1$ within the first $5$ time units and after visiting $R_1$,
reaching $R_2$ between the $3$rd and $5$th time units, while avoiding
hitting the walls. We define atomic propositions $ R_i$, $i=1,2$,
which evaluates true when the robot is in region $R_i$. An atomic
proposition $HitWall$ evaluates true if the robot hits the surrounding
walls.  The \ac{mitl} formula describing the specification is
$\varphi=\lozenge_{[0,5]}( (R_1 \land \neg HitWall)\land
\lozenge_{[3,5]}( R_2\land \neg HitWall))$.
Given an initial state $\bm x_0$, we want to find an optimal policy
that maximizes the probability of $\varphi$ being satisfied. We select
a spatial step $h=(0.5,0.5,\pi/4)^T$ to obtain a uniform
discretization of the state space $X$. Given the choice of $h$, the
time step $\delta$ is chosen to be $0.2$ time units for the local
consistency condition to hold for all state and control input
pairs. The number of states in the \ac{mdp} $M^h$ is $1089$ and the
number of product states in the modified product \ac{mdp}
$\hat{\calM^d}$ is $58809$ (after trimming unreachable states). Remind
that the value iteration is polynomial in the size of the \ac{mdp}
$\hat\calM^d$.  The implementation are in
MATLAB\textsuperscript{\textregistered} on a desktop with Intel(R)
Core(TM) processor and 16 GB of memory. The computation of the product
\ac{mdp} takes $18$ minutes and the value iteration converges after
$50$ iterations with a pre-specified error tolerance of $0.01$. Each
iteration takes about $6$ minutes. In the value iteration we also
approximate the input space $U$ with a finite set $U^\epsilon$ where
$\epsilon=0.2$ is the discretization parameter for the input space.

Since the product state space of the example is $5$-dimensional, we
select to plot the optimal value $V^h$ for the states with the initial
heading angle $\theta = 0$, the initial state of the timed automaton
and initial clock vector $\bm 0$ in
Figure~\ref{fig:valuefunction}. Figures~\ref{fig:3d-view},
\ref{fig:xy-view}, \ref{fig:xt-view}, and \ref{fig:yt-view} show the
sample paths starting from $\bm{x}_0=(0.5,0.5,0)$ for a time interval
$[0,6]$ from different perspectives. The optimal value $V^d(s)$ with
$s= ((0.5,0.5,0),\init,\bm 0)$ is $0.54$, which is the approximately
maximal probability for satisfying $\varphi$ in the point-based
semantics under the sampling interval $0.2$ in the system with initial
state $\bm x(0)=(0.5,0.5,0)$. In simulation, there are $11$ paths
(marked in blue) out of $20$ sample paths that satisfy the
specification in the point-based semantics. 


The drawback of the explicit approach is scalability. In order to
compute a control policy with a finer approximation, we need to reduce
the spatial step $h$ as well as the time step $\delta$ for the local
consistency condition to hold. The product state space becomes very
large for a fine discretization. For example, if $h$ is chosen to be
$(0.2,0.2, \pi/4)^T$, $\delta$ has to be chosen below $0.1$ time units
and for the simple example, the product \ac{mdp} has $608303$ states
after trimming. We did not carry out the computation for $V^h$ given
this finer discretization since it is very time consuming. We discuss
the limitation and possible solutions to deal with the issue of
scalability in Section~\ref{sec:conclude}. 

\section{Conclusions and future work}
\label{sec:conclude}
This paper proposes a numerical method based on the Markov chain
approximation method for stochastic optimal control with respect to a
subclass of quantitive metric temporal logic specifications. We show
that as the discretization gets finer, the optimal control policy in
the discrete abstract system with respect to satisfying the \ac{mitl}
specification in the point-based semantics converges to the optimal
policy in the original system with respect to the dense-time semantics
for satisfying the \ac{mitl} formula. The approach can be easily
extended to bounded-time MTL formulas including signal temporal logic
formulas. In the future work, we aim to investigate the error bounds
introduced by the proposed discrete approximation method. On the other
hand, since scalability is a critical issue in the explicit
approximation method, we will also investigate a solution approach
based on implicit approximation \cite{kushner2001numerical}. With
implicit approximation method, we can potentially reduce the size of
discrete abstract system by treating the clock vector as a state
variable, whose discretization parameters are pre-defined and
potentially different from the interpolation interval. Parallel
algorithms and distributed planning for large-scale \ac{mdp}s are also
considered to handle the issue of scalability.

\bibliographystyle{ieeetran}

\end{document}